\newtcbox{\mybox}[1][]{%
    nobeforeafter, math upper, tcbox raise base,
    enhanced, colframe=blue!30!black,
    colback=blue!30, boxrule=1pt,
    #1}
\algnewcommand{\Inputs}[1]{%
	\State \textbf{Inputs:}
	\Statex \hspace*{\algorithmicindent}\parbox[t]{.8\linewidth}{\raggedright #1}
}
\algnewcommand{\Initialize}[1]{%
	\State \textbf{Initialize:}
	\Statex \hspace*{\algorithmicindent}\parbox[t]{.8\linewidth}{\raggedright #1}
}
\algnewcommand{\Outputs}[1]{%
	\State \textbf{Outputs:}
	\Statex \hspace*{\algorithmicindent}\parbox[t]{.8\linewidth}{\raggedright #1}
}
\tikzset{
	%Define standard arrow tip
	>=stealth',
	%Define style for boxes
	block/.style={
		rectangle,
		rounded corners,
		draw=black, very thick,
		text width=12em,
		minimum height=3em,
		text centered},
	% Define arrow style
	link/.style={
		->,
		thick,
		shorten <=2pt,
		shorten >=2pt},
	% Define decision style       
	decision/.style={
		diamond,
		draw, very thick,
		fill=blue!20, 
		text width=8em,
		aspect=3,
		text centered}
}
\theoremstyle{plain}
\newtheorem{theorem}{Theorem}[section]
\newtheorem{lemma}[theorem]{Lemma}
\newtheorem{remark}[theorem]{Remark}
\newcommand{\longthmtitle}[1]{\mbox{}{\bf \textit{(#1).}}}
\newcommand{\oprocendsymbol}{\hbox{$\square$}}
\newcommand{\oprocend}{\relax\ifmmode\else\unskip\hfill\fi\oprocendsymbol}
\newcommand{\pushright}[1]{\ifmeasuring@#1\else\omit\hfill$\displaystyle#1$\fi\ignorespaces}
\newcommand{\pushleft}[1]{\ifmeasuring@#1\else\omit$\displaystyle#1$\hfill\fi\ignorespaces}
\renewcommand*\env@matrix[1][*\c@MaxMatrixCols c]{%
	\hskip -\arraycolsep
	\let\@ifnextchar\new@ifnextchar
	\array{#1}}
\DeclareMathOperator{\xb}{\mathbf{x}}
\DeclareMathOperator{\zb}{\mathbf{z}}
\DeclareMathOperator{\wb}{\mathbf{w}}
\DeclareMathOperator{\Lb}{\mathbf{L}}
\DeclareMathOperator{\D}{\mathcal{D}}
\DeclareMathOperator{\Pint}{\mathbb{N}}
\DeclareMathOperator{\E}{\mathcal{E}}
\DeclareMathOperator{\graph}{\mathcal{G}}
\newcommand{\Nc}{\ensuremath{\mathcal{N}}}
\newcommand{\blkdiag}{\operatorname{blkdiag}}
\newcommand{\ones}{\mathbf{1}}
\newcommand{\real}{\ensuremath{\mathbb{R}}}
\def\@fnsymbol#1{\ensuremath{\ifcase#1\or *\or \dagger\or \ddagger\or
   \mathsection\or \mathparagraph\or \|\or **\or \dagger\dagger
   \or \ddagger\ddagger \else\@ctrerr\fi}}
\newcommand{\ssymbol}[1]{^{\@fnsymbol{#1}}}
\newcommand\norm[1]{\left\lVert#1\right\rVert}
\begin{document}

\title{A Privacy Preserving Distributed Model Identification Algorithm for Power Distribution Systems}
\author{Chin-Yao Chang
\thanks{C.-Y. Chang is with the National Renewable Energy Laboratory, Golden, CO 80401, USA (Emails: \{chinyao.chang\}@nrel.gov).}
\thanks{This work was authored in part by NREL, operated by Alliance for Sustainable Energy, LLC, for the U.S. Department of Energy (DOE) under Contract No. DE-AC36-08GO28308. Funding provided by DOE Office of Electricity, Advanced Grid Modeling Program, through agreement NO. 33652. The views expressed in the article do not necessarily represent the views of the DOE or the U.S. Government. The U.S. Government retains and the publisher, by accepting the article for publication, acknowledges that the U.S. Government retains a nonexclusive, paid-up, irrevocable, worldwide license to publish or reproduce the published form of this work, or allow others to do so, for U.S. Government purposes.}
}

\maketitle

\begin{abstract}
Distributed control/optimization is a promising approach for network systems due to its advantages over centralized schemes, such as robustness, cost-effectiveness, and improved privacy. However, distributed methods can have drawbacks, such as slower convergence rates due to limited knowledge of the overall network model. Additionally, ensuring privacy in the communication of sensitive information can pose implementation challenges. To address this issue, we propose a distributed model identification algorithm that enables each agent to identify the sub-model that characterizes the relationship between its local control and the overall system outputs. The proposed algorithm maintains the privacy of local agents by only communicating through dummy variables. We demonstrate the efficacy of our algorithm in the context of power distribution systems by applying it to the voltage regulation of a modified IEEE distribution system. The proposed algorithm is well-suited to the needs of power distribution controls and offers an effective solution to the challenges of distributed model identification in network systems. 

% The sub-models can be combined to the full system model for centralized control use if needed. 
%The sub-models are updated online with a new batch of input-output data, making the identified model adapt well with time-varying systems. 
\end{abstract}

\section{Introduction}
In recent years, there has been a growing trend towards emphasizing the privacy of users. This has led to the adoption of stricter regulations and policies to protect user privacy, such as the General Data Protection Regulation (GDPR) in Europe~\cite{voigt2017eu} and the California Consumer Privacy Act (CCPA)~\cite{bukaty2019california} in the United States. As the trend towards emphasizing privacy is likely to continue, how to extract values from data without sacrificing privacy becomes very relevant in many applications such as healthcare, insurance, and FinTech. The technology behind is known as federated learning, which has found success in language model, image recognition, etc~\cite{konevcny2016federated,li2020review}. Some recent works~\cite{wang2021electricity,lin2021privacy} also found that federated learning can be useful in developing energy management strategies for future power grids, which could involve millions of controllable devices~\cite{kroposki2020aeg}. However, those are restricted to gaining knowledge of the pattern of certain classes of power consumption or generation, while can not help identify a system-level model (e.g. power flow model), which is very much needed for sophisticated control of future power distribution systems.

Distribution systems are mostly unobservable largely because of the cost-ineffectiveness of installing all the sensors. Many existing works are then on identifying some valuable information for grid controls, such as topology of the distribution system~\cite{bolognani2013identification,ardakanian2019identification} and state estimation~\cite{primadianto2016review,de2016improved,guo2020solving}. Though distribution system state estimation (DSSE) is about estimating the voltage and currents for buses without sensors, many DSSE still requires some knowledge of the admittance matrix or similar information~\cite{de2016improved,guo2020solving}. The knowledge of the admittances is also valuable for various controls of distribution systems, e.g.,~\cite{salim2012model,chang2019saddle}, but such information is not necessarily easy to obtain, especially considering the data collection hurdles when privacy comes into the equation. 

The aforementioned works assume a central entity collects the data from all the sensors in the distribution system for the estimations, which is also the case for estimating the admittance matrix~\cite{zhang2021distribution}. Collecting all the available data, especially the power consumption patterns of the local users or distributed energy resources can be challenging. Generally speaking, there are some distributed algorithms that can estimate the system model, say the admittance matrix, through distributed communication~\cite{stegagno2019distributed,huang2022scalable}. However, distributed algorithms mechanically have a consensus element that are usually about the consensus of the estimated models or even the local state variables of the agents, which leaves the valuable information floating on the communication network that could be susceptible for cyber attacks. Another downside of many distributed algorithms is that they are essentially gradient descent that makes them robust over package drops or delays, while the convergence rate is compromised. Adam algorithm (Adaptive Moment Estimation)~\cite{kingma2014adam} provides sophisticated step-size (learning rate) selection that can improve the convergence rate of the distributed algorithms. Adam algorithm is also found very effective for large-scale optimizations~\cite{chilimbi2014project,jose2016scalable}, thus it grows as one of the mainstream algorithms in the machine learning field~\cite{ruder2016overview}. Bringing in Adam algorithm elements in distributed algorithms can potentially improve the convergence rate.

%Identification of the model~\cite{papusha2014collaborative}, while the model is in the form $[y_1;\cdots;y_n] = blkdiag(\{\theta_i\}_{i=1}^n) [u_1;\cdots;u_n]$
%\cite{bolognani2013identification} presents a method for identifying the topology of a power distribution network based on voltage correlation analysis.
%\cite{ardakanian2019identification} proposed framework that leverages smart meters and phasor measurement units (PMUs) data to identify the topology of the distribution grid and estimate the state of the grid with the presence of measurement errors and communication delays.
%Multiagent systems: robotic, economics, telecommunications 
%Multiagent reinforcement learning~\cite{busoniu2008comprehensive}
%Distributed Adam~\cite{li2022federated}

\textit{Contributions:} The contribution of the paper is mainly on developing a distributed model identification algorithm such that each agent in the network system identifies the sub-model for the correlation between its local controls and the overall system outputs, which can be understood as a sub-matrix of the LinDistFlow model (or its equivalence) for the distribution system if the inputs are the power injections and outputs are the voltage magnitudes. The proposed algorithm takes elements of Adam algorithm to improve the convergence rate. On top of it, it has an appealing property that the packets exchanged between the agents are dummy variables so that local data and the identified model are kept private. We demonstrate the effectiveness of the proposed distributed model identification algorithm with a modified IEEE test system.

\section{Preliminary}\label{sec:prelim}
\subsection{Notations}
For $x\in\real^n$, we denote its ${l}^2$-norm and quadratic norm in terms of matrix $A\succ 0$ by $\|x\|_2$ and $\|x\|_A$, respectively. For matrices $A_1,A_2,\cdots,A_N$ with $A_i\in\real^{m\times n}$ for all $i=1,2,\cdots,N$, we denote $\blkdiag{(\{A_i\}_{i=1}^{N})}$ as the block diagonal matrix of all the $A_i$; $[A_1;A_2;\cdots; A_n]\in\real^{Nm \times n}$ and $[A_1,A_2,\cdots, A_n]\in\real^{m\times Nn}$ respectively as the vertical and horizontal concatenations. The Hadamard product and division of $A_i$ and $A_j$ are denoted as $A_i \odot A_j$ and $A_i ./ A_j$, respectively. For a matrix $A\in\real^{m\times n}$, $vec(A)\in\real^{mn}$ is a column vector that vectorizes $A$ by concatenating the column vectors of $A$ from left to right. The kernel of matrix $A$ is denoted as ker$(A)$. The scalar element of the $i^{th}$ row and $k^{th}$ column of $A$ is denoted as $A(i,k)$; $A(:,k)$ is the $k^{th}$ column vector of $A$. 
%The maximal and minimal eigenvalues of $A$ are given as $\lambda_M(A)$ and $\lambda_m(A)$, respectively. 
An identity matrix and all-ones vector with dimension $n$ are denoted as $I_n$ and $\ones_n$, respectively. 

\subsection{System modeling with input-output data}\label{sec:model}
%Our recent work~\cite[Section II]{chang2022robust} shows that for many classes of static/dynamical systems, the data representation of the system model can be written into a quite compact unified form. 
In the following, we briefly overview the data-driven modeling framework which will be used for the remainder of this paper. Let $u(k)\in\real^m$ and $y(k)\in\real^n$ be the input and output data at time instant $k$ for an unknown system. Given the input-output data collected for time $k=0,\cdots,T-1$, we define the data matrices shown in the following:
\begin{subequations}\label{eq:data_matrices}
\begin{align}
    &U = [\phi_u(u(0)),\cdots,\phi_u(u(T-1))], \\ 
    &Y = [\phi_y(y(0)),\cdots,\phi_y(y(T-1))],
\end{align}
\end{subequations}
where $\phi_u: \real^m \mapsto \real^m$ and $\phi_y: \real^n \mapsto \real^n$ are the mappings that capture the known (can be nonlinear) physics of the system to be identified. Assuming that the only unknown part of the targeted system is linear and characterized by $A\in\real^{n\times m}$, the following equation holds:
\begin{align} \label{eq:data_lin_eq}
    Y = A U.
\end{align}
The model identification problem with given data ($U$ and $Y$) can be understood as solving the linear equation~\eqref{eq:data_lin_eq}. If $U$ has full row rank, then the system model $A$ is uniquely defined. The straight data representation of the system model with data~\eqref{eq:data_lin_eq} is useful for controller design purposes, more details are available in~\cite{de2019formulas,chang2022robust}.

\section{Distributed model identification algorithm}\label{sec:dis_model_ID}

In this section, we will first show how~\eqref{eq:data_lin_eq} is formulated in a network system, and then lay out how to reformulate it as a distributed optimization problem. We next develop a distributed model identification algorithm leveraging Adam adaptive step-size. 
%The locally identified sub-models enable a decentralized control scheme such that the performance equates centralized controls.
% leveraging Adam for solving the distributed optimization with an upper bound of the regret.

\subsection{System modeling of network systems}
% For a network system, the actuators and sensors are spread in various locations of the interconnected network. 
We consider a network system that is partitioned into $N$ number of regions, and each region has an associated agent that collect all the actuator and sensor data in the region. We assume such a network partitioning setup effectively makes each agent $i\in\Nc:=\{1,2,\cdots,N\}$ only be able to get information on certain rows of $U$ and $Y$ for the purpose of formulating model identification problem~\eqref{eq:data_lin_eq}. Define $\D_{u,i}$ and $\D_{y,i}$ as the sets of rows of $U$ and $Y$ that are known for agent $i$, respectively. We assume the full observability of the system in the sense that $\cup_{i\in\Nc}\D_{u,i} = \{1,2,\cdots,m\}$ and $\cup_{i\in\Nc}\D_{y,i} = \{1,2,\cdots,n\}$ with $\D_{u,i}\cap \D_{u,j} = \emptyset$ and $\D_{y,i}\cap \D_{y,j} = \emptyset$ for all $i\not= j$ for simplicity. In this setup, the agent $i$ can capture how its regional controls affect the output $Y$ by knowing just some columns of $A$ instead of the full matrix, for which we define the sub-matrix by $A_{\D_{u,i}} \in \real^{n\times |\D_{u,i}|}$ and without loss of generality, $A = [A_{\D_{u,1}},A_{\D_{u,2}},\cdots,A_{\D_{u,N}}]$. Note that even only $A_{\D_{u,i}}$ is needed for agent $i$, it is achievable to identify and reach consensus on the full model $A$ for all the agents; however, that can lead to massive data exchanges between the agents that overwhelm the distributed communication network. Therefore, the goal for each agent $i$ is identifying $A_{\D_{u,i}}$ by distributed communication and locally available sub-matrices of $U$ and $Y$.

%Let $U(k)$ and $Y(k)$ be the $k^{th}$ column of the data matrix $U$ and $Y$, respectively (or $k^{th}$ data point). Without loss of generality, we label the locally available elements of the input-output data of the agent $i$ by $(U_{i}(k),Y_{i}(k))$ such that $U(k) = \sum_{i=1}^N U_{i}(k)$ and $Y_k = \sum_{i=1}^N Y_{i}(k)$, where every vector $U_{i}(k)\in\real^m$ and $Y_{i}(k)\in\real^n$ are sparse and only takes non-zero values if those elements are associated with the measurements that agent $i$ has access to. In this setup,~\eqref{eq:data_lin_eq} is rewritten as  
%\begin{align} \label{eq:data_lin_eq_2}
%    Y = \sum_{i=1}^N Y_i = A \sum_{i=1}^N U_i,
%\end{align}
%where $U_i = [U_i(0),\cdots,U_i(T-1)]$ and $Y_i = [Y_i(0),\cdots,Y_i(T-1)]$. We further partition $A = [A_1,\cdots,A_N]$ such that each $A_i\in\real^{n\times m_i}$ collects the columns associated with non-zero elements of $U_i(k)$ and rewrite~\eqref{eq:data_lin_eq_2} as 
%\begin{align} \label{eq:data_lin_eq_3}
%    Y = \sum_{i=1}^N Y_i = \sum_{i=1}^N A_i U_i. 
%\end{align}
%Equation~\eqref{eq:data_lin_eq_3} indicates that from the control perspective, each agent $i$ only needs to know the sub-model $A_i$ because $A_j$, $j\not=i$, are not relevant to the controls of agent $i$, $U_i$, anyway. 

\subsection{Distributed reformulation of the system modeling}\label{sec:dis_reform}
In this section, we go through a series of  reformulations of~\eqref{eq:data_lin_eq} for convenience of distributed algorithm design. Define $\xb_i= vec(A_{\D_{u,i}})$ and $\xb= [\xb_1;\xb_2;\cdots;\xb_n] \in\real^{nm}$. In this rearrangement, we consider the following formulation of~\eqref{eq:data_lin_eq}:
\begin{align}\label{eq:data_opt}
    &\min_{\xb} \frac{1}{2}\sum_{k=1}^T \norm{\Big(\sum_{i=1}^N U_{k,\D_{u,i}}\cdot \xb_i \Big) - Y(:,k)}^2_2,
    %&\text{ s.t. } \xb_1 = \xb_2 = \cdots = \xb_N,  \label{eq:data_lin_eq_2-2}
\end{align}
where $U_{k,\D_{u,i}}\in\real^{n\times n\cdot|\D_{u,i}|}$ is the horizontal concatenation of $U(i,k)\cdot I_n$ for all $i\in\D_{u,i}$. One can verify that the optimal solution of~\eqref{eq:data_opt} is a solution of~\eqref{eq:data_lin_eq} by direct algebra. The reason for the optimization formulation is that practically, it is unlikely to find $\xb$ such that~\eqref{eq:data_lin_eq} holds due to the noisy data, communication disturbances or other disturbances. We next rewrites~\eqref{eq:data_opt} in a more compact way.  Defining $U_{\D_{u,i}} = [U_{1,\D_{u,i}};U_{2,\D_{u,i}};\cdots;U_{T,\D_{u,i}}]$, $Y_{\D_{y,i}} = [Y_{1,\D_{y,i}};Y_{2,\D_{y,i}};\cdots;Y_{T,\D_{y,i}}]$ with $Y_{k,\D_{y,i}}\in\real^n$ such that
\begin{align*}
    Y_{k,\D_{y,i}}(j) = \begin{cases} Y(i,k) & \text{if } j\in\D_{y,i}, \\  0 & \text{otherwise, }\end{cases}
\end{align*}
we rewrite~\eqref{eq:data_opt} as
%$\hat{Y}_i = [Y_i(1);Y_i(2);\cdots;Y_i(T)]$ and $\hat{U}_i = [I_n \otimes U_{i}(1)^\top;\cdots;I_n \otimes U_{i}(T)^\top]$ for all $i=1,\cdots,N$, we next rewritten~\eqref{eq:data_opt} in a compact way:
\begin{align}\label{eq:data_opt_2}
    &\min_{\xb} f(\xb), \quad f(\xb):= \frac{1}{2} \norm{\sum_{i=1}^N (U_{\D_{u,i}} \xb_i - Y_{\D_{y,i}})}^2_2.
    %&\text{ s.t. } \xb_1 = \xb_2 = \cdots = \xb_N,  \label{eq:data_lin_eq_2-2}
\end{align}
The objective function $f(\xb)$ couples the variables and data for all the agents, which is not yet solvable with distributed communications. In the following, we first leverage the results in~\cite{huang2022scalable} for an algorithm that can solve~\eqref{eq:data_opt_2} distributively, followed with some modifications with Adam step-size and illustrating the privacy preserving properties. Define the graph associated with the distributed communication network as $\graph= (\Nc,\E)$, where  $\E\subseteq\Nc\times \Nc$ is the set of edges (communication links). We next consider the following optimization:
\begin{align}\label{eq:data_opt_3}
    \min_{\xb,\wb} \hat{f}(\xb,\wb), \quad \hat{f}(\xb,\wb):= \frac{1}{2} \norm{\hat{U} \xb - \hat{Y} - \Lb^{\frac{1}{2}}\wb}^2_2,
\end{align}
where $\wb\in\real^{TNn}$ is a newly introduced slack variable, $\hat{U} = \blkdiag{(\{\hat{U}_{\D_{u,i}}\}_{i=1}^{N})}$, $\hat{Y} = [Y_{\D_{y,1}}; Y_{\D_{y,2}}, \cdots , Y_{\D_{y,N}}]$,   $\Lb = (L\otimes I_{nT})$, $L$ is the Laplacian matrix associated with $\graph$, and $\Lb^{\frac{1}{2}}$ is the square root of $\Lb$. Lemma~\ref{Lem:eq_opt3} (a compact version of~\cite[Lemma 3.1]{huang2022scalable}) shows that the optimal solutions of~\eqref{eq:data_opt_3} are also the ones for~\eqref{eq:data_opt_2} through KKT optimality condition arguments. 
\begin{lemma}\longthmtitle{Optimal solutions of~\eqref{eq:data_opt_2} and~\eqref{eq:data_opt_3}}
\label{Lem:eq_opt3}
If $\graph$ is connected and $(\xb^\star,\wb^\star)$ is an optimal solution of~\eqref{eq:data_opt_3}, then $\xb^\star$ is an optimal solution of~\eqref{eq:data_opt_2}.
\end{lemma}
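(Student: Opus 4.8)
The plan is to treat both \eqref{eq:data_opt_2} and \eqref{eq:data_opt_3} as unconstrained convex quadratic programs---each objective is one half the squared Euclidean norm of an affine map of the decision variables---so that a point is optimal if and only if the gradient vanishes there; the whole statement then reduces to manipulating first-order stationarity conditions, which is the KKT argument alluded to in the text. Concretely, I would set $s^\star := \hat U\xb^\star - \hat Y - \Lb^{1/2}\wb^\star$, so that $\nabla_{\xb}\hat f(\xb^\star,\wb^\star)=0$ gives $\hat U^\top s^\star = 0$ and $\nabla_{\wb}\hat f(\xb^\star,\wb^\star)=0$ gives $\Lb^{1/2}s^\star = 0$ (using symmetry of $\Lb^{1/2}$). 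Then I invoke connectivity of $\graph$: since $\Lb^{1/2} = L^{1/2}\otimes I_{nT}$ has the same kernel as $\Lb = L\otimes I_{nT}$ and $\ker L = \mathrm{span}(\ones_N)$ for a connected graph, $\Lb^{1/2}s^\star = 0$ forces $s^\star = \ones_N\otimes c$ for a common block $c\in\real^{nT}$; equivalently, every block $s_i^\star = \hat U_{\D_{u,i}}\xb_i^\star - Y_{\D_{y,i}} - (\Lb^{1/2}\wb^\star)_i$ equals this same $c$, and reading $\hat U^\top s^\star = 0$ blockwise gives $\hat U_{\D_{u,i}}^\top c = 0$ for every $i\in\Nc$.

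Next I would connect $c$ to the residual of \eqref{eq:data_opt_2}. Let $r^\star := \sum_{i=1}^N (\hat U_{\D_{u,i}}\xb_i^\star - Y_{\D_{y,i}})$, so that the $i$-th block of $\nabla f(\xb^\star)$ is $\hat U_{\D_{u,i}}^\top r^\star$ (here $\hat U_{\D_{u,i}}$ is the same matrix $U_{\D_{u,i}}$ appearing in \eqref{eq:data_opt_2}). Summing the block identities $\hat U_{\D_{u,i}}\xb_i^\star - Y_{\D_{y,i}} = c + (\Lb^{1/2}\wb^\star)_i$ over $i$ yields $r^\star = Nc + (\ones_N^\top\otimes I_{nT})\Lb^{1/2}\wb^\star$, and the slack term drops out because $(\ones_N^\top\otimes I_{nT})\Lb^{1/2} = (\ones_N^\top L^{1/2})\otimes I_{nT} = 0$, since $\ones_N\in\ker L^{1/2}$. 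Hence $r^\star = Nc$ and $\hat U_{\D_{u,i}}^\top r^\star = N\,\hat U_{\D_{u,i}}^\top c = 0$ for all $i$, i.e.\ $\nabla f(\xb^\star)=0$. Convexity of $f$ then promotes this stationary point to a global minimizer, so $\xb^\star$ solves \eqref{eq:data_opt_2}.

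The gradient computations and the index/Kronecker bookkeeping are routine. The one substantive step---and the place to be careful---is showing that the slack contribution $\Lb^{1/2}\wb$ averages to zero over the $N$ blocks; this is exactly where connectivity of $\graph$ enters (through $\ones_N^\top L = 0$, equivalently $\ker L = \mathrm{span}(\ones_N)$), and without it the passage from the lifted residual $s^\star$ back to the true residual $r^\star$ of \eqref{eq:data_opt_2} would break down. A secondary care point is simply confirming that first-order stationarity is both necessary and sufficient for optimality in each convex quadratic problem.
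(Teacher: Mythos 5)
Your proposal is correct and follows essentially the same route as the paper: write the first-order (KKT) conditions of~\eqref{eq:data_opt_3}, use connectivity via $\ker(\Lb^{1/2})=\ker(\Lb)=\mathrm{span}(\ones_N)\otimes\real^{nT}$ to force the lifted residual into consensus form $\ones_N\otimes c$, sum over blocks so the $\Lb^{1/2}\wb^\star$ term annihilates, and recognize the resulting identity as the stationarity condition of~\eqref{eq:data_opt_2} (your $c$ and $r^\star=Nc$ are exactly the paper's $z^\star$ and~\eqref{eq:Nz_star}). Your added remarks on convexity making stationarity sufficient are implicit in the paper's KKT phrasing but are worth being explicit about.
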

\proof
By KKT conditions, $(\xb^\star,\wb^\star)$ is an optimal solution of~\eqref{eq:data_opt_3} if and only if
\begin{subequations}\label{eq:kkt_opt_3}
    \begin{align}
        \hat{U}^\top (\hat{U} \xb^\star - \hat{Y} - \Lb^{\frac{1}{2}}\wb^\star) = 0, \label{eq:kkt_opt_3-1} \\
        {\Lb^{\frac{1}{2}}}^\top(\hat{U} \xb^\star - \hat{Y} - \Lb^{\frac{1}{2}}\wb^\star) = 0. \label{eq:kkt_opt_3-2}
    \end{align}
\end{subequations}
By~\eqref{eq:kkt_opt_3-2} and the property of ker$(\Lb^{\frac{1}{2}})=$ ker$(\Lb)$, we can define $\zb^\star = \ones_N \otimes z^\star = \hat{U} \xb^\star - \hat{Y} - \Lb^{\frac{1}{2}}\wb^\star$. We derive the following equation by left multiplying $\zb^\star$ by $\ones^\top_N \otimes I_{nT}$:
\begin{align}\label{eq:Nz_star}
    N z^\star = \sum_{i=1}^N(\hat{U}_{\D_{u,i}} \xb^\star_i - \hat{Y}_{\D_{y,i}}),
\end{align}
where ker$(\Lb^{\frac{1}{2}})=$ ker$(\Lb)$ is used again in deriving~\eqref{eq:Nz_star}. Substituting~\eqref{eq:Nz_star} to~\eqref{eq:kkt_opt_3-1} gives
\begin{align}\label{eq:kkt_opt_2}
    \hat{U}_{\D_{u,j}}^\top \Big(\sum_{i=1}^N(\hat{U}_{\D_{u,i}} \xb^\star_i - \hat{Y}_{\D_{y,i}}) \Big) = 0, \quad\quad \forall j\in\Nc.
\end{align}
Because~\eqref{eq:kkt_opt_2} is the KTT condition for the optimality of optimization~\eqref{eq:data_opt_2}, we conclude that $\xb^\star$ is also an optimal solution of~\eqref{eq:data_opt_2} and complete the proof.  \qed 
%\textit{Necessity}: Given the optimal solution of~\eqref{eq:data_opt_2}, $\xb^\star$, we define $z^\star$ and $\zb^\star$ such that $z^\star = \frac{1}{N}\Big(\sum_{i=1}^N(\hat{U}_i \xb^\star_i - \hat{Y}_i) \Big)$ and $\zb^\star = \ones_N \otimes z^\star$. Because $(\ones^\top_N \otimes I_{Tn})\zb^\star = (\ones^\top_N \otimes I_{Tn})(\hat{U} \xb^\star - \hat{Y})$, there exists a $\wb^\star$ such that
%\begin{align}
%    \hat{U} \xb^\star - \hat{Y} - \zb^\star - \Lb^{\frac{1}{2}}\wb^\star = 0.
%\end{align}
\endproof
%Optimization~\eqref{eq:data_opt_3} can be solved in a distributed way as shown in~\cite{huang2022scalable}. Because the formulation in~\cite{huang2022scalable} is unnecessary complicated in solving~\eqref{eq:data_opt_3}, we will go through the key steps of the distributed algorithm for~\eqref{eq:data_opt_3} in the next section.
% Because our model identification problem~\eqref{eq:data_opt_2} can be changed due to different batch of data as opposed to unchanged targeted optimization in~\cite{huang2022scalable}, we take a different approach from the gradient method proposed in~\cite{huang2022scalable}, detailed in the next section.
With Lemma~\ref{Lem:eq_opt3}, we focus on solving~\eqref{eq:data_opt_3} distributively for the original model identification problem.

\subsection{Distributed algorithm for model identification}
The first step of the distributed algorithm for~\eqref{eq:data_opt_3} is looking into the gradient descent of~\eqref{eq:data_opt_3}:
\begin{subequations}
\label{eq:grad}
\begin{align}
   \dot{\xb} &= -\hat{U}^\top (\hat{U}\xb - \hat{Y} - \Lb^{\frac{1}{2}} \wb), \label{eq:grad-1} \\
   \dot{\wb} &= \Lb^{\frac{1}{2}}(\hat{U}\xb - \hat{Y} - \Lb^{\frac{1}{2}} \wb). \label{eq:grad-2}        
\end{align}
\end{subequations}
Because $\Lb^{\frac{1}{2}}$ is not necessarily sparse,~\eqref{eq:grad} can not be directly implemented in a distributed way. Even if we formulate~\eqref{eq:data_opt_3} by replacing $\Lb^{\frac{1}{2}}$ with $\Lb$ (the results of Lemma~\ref{Lem:eq_opt3} still hold), the associated gradient descent with $\Lb^{\frac{1}{2}}$ replaced by $\Lb$ in~\eqref{eq:grad} still requires packet exchanges between two-hop neighbors, and the local input-output data should be shared between the agents, which is not desirable. By introducing a change of variable 
\begin{align} \label{eq:z_def}
\zb= \hat{U} \xb - \hat{Y} - \Lb^{\frac{1}{2}}\wb,    
\end{align}
we instead consider the following alternative gradient method:
\begin{subequations}\label{eq:pd}
\begin{align}
    \dot{\xb} &=-\hat{U}^\top \zb,  \\
    \dot{\zb} &= - \Lb \zb + \hat{U} \dot{\xb} = - (\Lb + \hat{U}\hat{U}^\top) \zb.
\end{align} 
\end{subequations}
Because $\hat{U}$ is block diagonal with the off-diagonal elements being zeros, the implementation of~\eqref{eq:pd} only requires distributed communication of $\zb$ with $\zb$ being partitioned properly in a way that $\zb = [\zb_1; \zb_2;\cdots;\zb_N]$, $\zb_i \in\real^{nT}$. \cite{huang2022scalable} has shown that~\eqref{eq:pd} and its Euler discrete formulation~\eqref{eq:pd_dis} converge to the optimal solution of~\eqref{eq:data_opt_3}.
\begin{subequations}\label{eq:pd_dis}
\begin{align}
    \xb(k+1) &= \xb(k) - \alpha\hat{U}^\top \zb(k),  \\
    \zb(k+1) &= \zb(k) - \alpha\Big(\Lb  + \hat{U}\hat{U}^\top\Big)\zb(k).
\end{align} 
\end{subequations}
To improve the convergence rate of~\eqref{eq:pd_dis}, we propose to introduce Adam adaptive step-size to~\eqref{eq:pd_dis}. 
Defining $D = \Lb  + \hat{U}\hat{U}^\top$ and $N_I$ as the number of iterations for the model identification, we propose the distributed model identification algorithm with Adam adaptive step-size in Algorithm~\ref{alg:D-ADAM}.
%\footnote{The square root of $\hat{s}_2(k)$ in line~\ref{alg:x} is element-wise}
\begin{algorithm}
\caption{Distributed Model Identification Algorithm with Adam adaptive step-size}\label{alg:D-ADAM}
\begin{algorithmic}[1]
\Require $\beta_1, \beta_2 \in [0,1)$, $\epsilon >0$, $k=0$
\Require Initialize $s_1(0)= s_2(0) = \hat{s}_1(0)= \hat{s}_2(0)= 0$, and $\xb(0)$, $\wb(0)$, $\zb(0)$ that satisfies~\eqref{eq:z_def}, 
\While{$k \leq N_I-1$}
    \State $k \gets k+1$ 
    \State $g(k) \gets \hat{U}^\top \zb(k-1)$
    \State $s_1(k) \gets \beta_1 s_1(k-1) + (1 - \beta_1)g(k) $ \label{alg:s1}
    \State $s_2(k) \gets \beta_2 s_2(k-1) + (1 - \beta_2) \Big( g(k) \odot g(k) \Big)$
    \State $\hat{s}_1(k) \gets s_1(k)/ (1-\beta_1^k)$
    \State $\hat{s}_2(k) \gets s_2(k)/ (1-\beta_2^k)$
    \State $\xb(k) \gets \xb(k-1) - \alpha \Big(\hat{s}_1(k)./(\sqrt{\hat{s}_2(k)} + \epsilon) \Big)$ \label{alg:x}
    \State $\zb(k) \gets \zb(k-1) - \alpha D\zb(k-1)$  \label{alg:z} % \Comment{This is a comment}
\EndWhile
\end{algorithmic}
\end{algorithm}

Note that the adaptive step-size is only used on the $\xb$ dynamics in line~\ref{alg:x} but not on $\zb$ in line~\ref{alg:z} because such a adaptive step-size on $\zb$ can break the consensus established through $\Lb$. Although we do not conclude a faster convergence rate in the proof of convergence of Algorithm~\ref{alg:D-ADAM} stated in Theorem~\ref{thm:converge}, we expect that Algorithm~\ref{alg:D-ADAM} converges faster than~\eqref{eq:pd_dis} due to the outperformance of Adam algorithm over the gradient descent method generally~\cite{ruder2016overview}.

\begin{theorem}\longthmtitle{Convergence of Algorithm~\ref{alg:D-ADAM}}\label{thm:converge}
Algorithm~\ref{alg:D-ADAM} has $\xb(k)\to \xb^\star$ as $k\to \infty$ if $\|\hat{U}\|_2<\infty$, $\|\hat{Y}\|_2<\infty$, $\beta_1, \beta_2 \in [0,1)$ and $\alpha$ is chosen such that
%\begin{subequations}
%\label{eq:converge_cond}
\begin{align}
    &(\alpha^2 + \mu) D^\top P D - \alpha\Big( D^\top P + P D \Big) \preceq 0 \label{eq:converge_cond-1} 
 %   \\ &\frac{\beta^2_1}{\sqrt{\beta_2}} < 1. \label{eq:converge_cond-2}
\end{align}
%\end{subequations}
for some $P\succ 0$ and $\mu>0$.
\end{theorem}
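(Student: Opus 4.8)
The plan is to analyze the coupled dynamics of $\zb$ and $\xb$ separately, since the $\zb$-update in line~\ref{alg:z} is autonomous: $\zb(k) = (I - \alpha D)\zb(k-1)$. First I would establish that $\zb(k)\to \zb^\star$ where $\zb^\star$ lies in $\ker(D)$. Since $D = \Lb + \hat{U}\hat{U}^\top$ is symmetric positive semidefinite, the iteration $\zb(k+1)=(I-\alpha D)\zb(k)$ converges provided the nonzero eigenvalues $\lambda$ of $D$ satisfy $|1-\alpha\lambda|<1$, i.e. $0<\alpha<2/\lambda_{\max}(D)$. The condition~\eqref{eq:converge_cond-1} is the Lyapunov-inequality form of exactly this spectral requirement: testing $(\alpha^2+\mu)D^\top P D - \alpha(D^\top P + PD)\preceq 0$ on an eigenvector of $D$ with eigenvalue $\lambda$ and $P=I$ gives $(\alpha^2+\mu)\lambda^2 - 2\alpha\lambda \le 0$, which forces $\alpha\lambda \le 2\alpha^2\lambda^2/(\alpha^2+\mu)\cdots$ — more cleanly, it yields $\alpha\lambda \le \frac{2}{1+\mu/\alpha^2}<2$, hence $|1-\alpha\lambda|<1$ on the range space of $D$. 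So from~\eqref{eq:converge_cond-1} with general $P\succ 0$ I would extract geometric decay of the component of $\zb(k)$ in $\mathrm{range}(D)$ using the Lyapunov function $V(\zb)=\zb^\top P \zb$ restricted to that subspace, and note the component in $\ker(D)$ is frozen. Combined with the initialization~\eqref{eq:z_def} and the KKT characterization in Lemma~\ref{Lem:eq_opt3}, the limit $\zb^\star$ is $\ones_N\otimes z^\star$ with $\hat U^\top\zb^\star=0$, matching~\eqref{eq:kkt_opt_3-1}.

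Next I would handle the $\xb$-dynamics. From line~\ref{alg:x}, $\xb(k)=\xb(0)-\alpha\sum_{j=1}^k \hat{s}_1(j)./(\sqrt{\hat{s}_2(j)}+\epsilon)$, and $g(k)=\hat{U}^\top\zb(k-1)\to \hat U^\top \zb^\star = 0$ geometrically. The key observation is that the Adam rescaling does not change the asymptotics because both $\hat s_1(k)$ and $\sqrt{\hat s_2(k)}$ are exponentially-weighted moving averages of $g$-quantities: with $g(k)\to 0$ geometrically (rate $\rho<1$) and $\beta_1,\beta_2<1$, both $\hat s_1(k)$ and $\hat s_2(k)$ tend to $0$ geometrically (rate $\max\{\rho,\beta_1\}$, resp. $\max\{\rho^2,\beta_2\}$). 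The bias-correction denominators $1-\beta_i^k\to 1$, so they are harmless. Therefore the increments $\hat s_1(k)./(\sqrt{\hat s_2(k)}+\epsilon)$ are bounded in norm by $\hat s_1(k)$-terms over $\epsilon$ — here the fixed $\epsilon>0$ in the denominator is essential, as it prevents the $0/0$ degeneracy and gives $\|\hat s_1(k)./(\sqrt{\hat s_2(k)}+\epsilon)\| \le \epsilon^{-1}\|\hat s_1(k)\|$, which is summable. Hence $\sum_k \hat s_1(k)./(\sqrt{\hat s_2(k)}+\epsilon)$ converges absolutely, so $\xb(k)$ converges to some limit $\xb^\infty$.

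To finish, I must show $\xb^\infty = \xb^\star$, an optimal solution of~\eqref{eq:data_opt_2}. For this I would pass to the limit in the invariant relation~\eqref{eq:z_def}. A cleaner route: track the quantity $\hat U\xb(k) - \hat Y - \Lb^{1/2}\wb(k)$ — but $\wb$ is not updated in the algorithm, so instead I would directly use that $\zb(k)$ was initialized to equal $\hat U\xb(0)-\hat Y - \Lb^{1/2}\wb(0)$ and track how $\zb(k)-\hat U\xb(k)$ evolves. From the updates, $\zb(k) - \hat U \xb(k) = \zb(k-1)-\alpha D\zb(k-1) - \hat U\xb(k-1) + \alpha\hat U(\hat s_1(k)./(\cdots))$; this is not a clean telescoping because the Adam step differs from $\hat U^\top\zb(k-1)$ in finite time, but the discrepancy $\delta(k):=\hat s_1(k)./(\sqrt{\hat s_2(k)}+\epsilon) - g(k)$ is also geometrically decaying and summable. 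So in the limit $\zb^\star - \hat U\xb^\infty$ equals a fixed vector plus $\Lb^{1/2}$ times something, and taking $\hat U^\top$ of the limiting version of~\eqref{eq:z_def}-type identity, together with $\hat U^\top\zb^\star = 0$, recovers the KKT condition~\eqref{eq:kkt_opt_2} for~\eqref{eq:data_opt_2}, whence $\xb^\infty$ is optimal; by Lemma~\ref{Lem:eq_opt3} this is the desired $\xb^\star$.

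\textbf{Main obstacle.} The delicate point is controlling the Adam-modified increment and showing it is summable without assuming $\hat s_2(k)$ is bounded away from zero — the fixed $\epsilon>0$ is what saves this, but one must argue carefully that the ratio $\hat s_1(k)./(\sqrt{\hat s_2(k)}+\epsilon)$ inherits geometric decay from $g(k)\to 0$ uniformly in the (possibly vanishing) coordinates of $\hat s_2$. A secondary subtlety is that condition~\eqref{eq:converge_cond-1} must be shown to actually deliver contraction on $\mathrm{range}(D)$ for a general $P\succ 0$ rather than just $P=I$; this is a standard Lyapunov argument ($\zb(k)^\top P\zb(k)$ is nonincreasing and strictly decreasing off $\ker(D)$, with the decrease quantitatively bounded via the PSD slack in~\eqref{eq:converge_cond-1}), but it requires care to separate the $\ker(D)$ and $\mathrm{range}(D)$ components since $P$ need not respect that orthogonal splitting. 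I would organize the proof as two lemmas (convergence of $\zb$; summability of the $\xb$-increments) followed by the limit identification.
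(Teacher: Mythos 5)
Your proposal follows essentially the same route as the paper: a Lyapunov argument with $V(\zb)=\zb^\top P\zb$ showing that~\eqref{eq:converge_cond-1} forces $D\zb(k)\to 0$ (hence $\zb(k)\to\zb^\star$ satisfying the KKT conditions), followed by the bound $\|\hat{s}_1(k)./(\sqrt{\hat{s}_2(k)}+\epsilon)\|_2\le \epsilon^{-1}\|\hat{s}_1(k)\|_2$ together with summability of $\|g(k)\|_2$ to show the $\xb$-increments are absolutely summable. The only substantive difference is that you spend real effort on identifying the limit of $\xb(k)$ as an optimizer (correctly noting that the Adam step breaks the invariance of~\eqref{eq:z_def}), a step the paper dispatches in one line by asserting that any bounded $\xb(k)$ admits a $\wb(k)$ solving $\zb^\star=\hat{U}\xb(k)-\hat{Y}-\Lb^{\frac{1}{2}}\wb(k)$ — so your version is, if anything, more careful on that point.
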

\begin{proof}
With line~\ref{alg:z} of Algorithm~\ref{alg:D-ADAM},~\eqref{eq:converge_cond-1} implies
\begin{align} \label{eq:converge_cond-1_re}
    &\zb^\top(k) P \zb(k) - \zb^\top(k-1) P \zb(k-1) \nonumber \\ 
    &\hspace{10mm}\leq -\mu\zb^\top(k-1) D^\top P D \zb(k-1), \nonumber \\
    \Rightarrow &\|\zb(k)\|_P- \|\zb(k-1)\|_P \leq -\mu\|D\zb(k-1)\|_P.
\end{align}
We next sum up~\eqref{eq:converge_cond-1_re} for $k=1,\cdots,N_I$, leading to
\begin{align} \label{eq:converge_cond-1_re_2}
    &\|\zb(N_I)\|_P \leq \|\zb(0)\|_P  - \sum_{k=0}^{N_I-1}\mu\|D\zb(k)\|_P.
\end{align}
Inequality~\eqref{eq:converge_cond-1_re_2} implies that $\|D\zb(k)\|_P\rightarrow 0$ as $k\rightarrow\infty$. Because $D= \Lb +\hat{U}\hat{U}^\top$, and $\Lb$ and $\hat{U}\hat{U}^\top$ are positive semidefinite, for any $\zb(k)\not=0$, $\|D\zb(k)\|_P=0$ if and only if $\Lb \zb(k) = \hat{U}\hat{U}^\top \zb(k) = 0$. Recall that $\zb^\star:= \hat{U}\xb^\star -\hat{Y}-\Lb^{\frac{1}{2}}\wb^\star$ is the optimal solution if and only if~\eqref{eq:kkt_opt_3} holds, so we can conclude $\zb(k) \rightarrow \zb^\star$ by $\|D\zb(k)\|_P \rightarrow 0$. Because $\norm{\zb^\star}_2< \infty$ and $\Lb^{\frac{1}{2}}\zb^\star = 0$, for any $\xb(k)$ with $\norm{\xb(k)}_2 <\infty$, there exists a $\wb(k)$ that solves
\begin{align*}
    \zb^\star= \hat{U} \xb(k) - \hat{Y} - \Lb^{\frac{1}{2}}\wb(k). 
\end{align*}
The $\xb(k)$ and $\wb(k)$ that solve the equation above are the optimal solutions of~\eqref{eq:data_opt_3}. In other words, we can conclude $\xb(k) \rightarrow \xb^\star$ given $\zb(k) \rightarrow \zb^\star$ as $k\rightarrow\infty$ by showing that $\norm{\xb(k)}_2<\infty$ for all $k\in\Pint$, or
\begin{align}\label{eq:x_bdd}
    \norm{\xb(k_1) - \xb(k_1)}_2 <\infty, \quad \forall \; k_1, \;k_2\in\Pint
\end{align}
To show~\eqref{eq:x_bdd} holds, we first state that there exists a $C_{g}<\infty$ such that 
\begin{align}\label{eq:bdd_sums-1}
\sum_{k=1}^{\infty}\|\hat{U}^\top\zb(k)\|_2 = \sum_{k=1}^{\infty} \|g(k)\|_2  \leq C_g.
\end{align}
Otherwise,~\eqref{eq:converge_cond-1_re_2} does not hold when $N_I\rightarrow \infty$. We next 
analyze the series of $\hat{s}_1(k)./(\sqrt{\hat{s}_2(k)} + \epsilon) $ that appears in the updates of $\xb(k)$ in Algorithm~\ref{alg:D-ADAM}:
\begin{align}
    &\norm{\sum_{k=1}^{\infty} \Big(\hat{s}_1(k)./(\sqrt{\hat{s}_2(k)} + \epsilon)\Big)}_2     \leq \frac{1}{\epsilon}\norm{\sum_{k=1}^{\infty}\hat{s}_1(k)}_2 \nonumber \\
    =& \frac{1}{\epsilon}\norm{\sum_{k=1}^{\infty}
    \frac{ \beta_1^{k}s_1(0) + (1-\beta_1)\sum_{i=1}^k\beta_1^{k-i} g(i) }{1-\beta_1^k}}_2, \label{eq:pf_1}
\end{align}
where line~\ref{alg:s1} of Algorithm~\ref{alg:D-ADAM} is used in deriving~\eqref{eq:pf_1}. Because $s_1(0) = 0$, we can simplify~\eqref{eq:pf_1} to:
\begin{align}
    & \norm{\sum_{k=1}^{\infty} \Big(\hat{s}_1(k)./(\sqrt{\hat{s}_2(k)} + \epsilon)\Big)}_2  \nonumber \\
    \leq&\frac{1-\beta_1}{\epsilon}\norm{\sum_{k=1}^{\infty} \frac{1}{1-\beta_1^k}\sum_{i=1}^k\beta_1^{k-i} g(i) }_2 \nonumber \\
    = &\frac{1-\beta_1}{\epsilon}\norm{\sum_{j=1}^{\infty} g(j) \cdot\Big(\sum_{h=j}^{\infty}\frac{\beta_1^{h-j}}{1-\beta_1^h}\Big) }_2 \nonumber \\
    \leq & \frac{1-\beta_1}{\epsilon}\cdot\frac{1}{(1-\beta_1)^2}\sum_{j=1}^{\infty} \|g(j)\|_2\leq \frac{C_g}{\epsilon(1-\beta_1)} <\infty \label{eq:pf_2}
\end{align}
The inequality~\eqref{eq:pf_2} implies~\eqref{eq:x_bdd} holds, which completes the proof.
  \qed 
%Recall that $\zb^\star:= \hat{U}\xb^\star -\hat{Y}-\Lb^{\frac{1}{2}}\wb^\star$ is the optimal solution if and only if~\eqref{eq:kkt_opt_3} holds, so we can conclude $\zb(k) \rightarrow \zb^\star$ by $\|D\zb(k)\|_P\rightarrow 0$. 

%It has been shown in~\cite[Theorem 10.5]{kingma2014adam} that if~\eqref{eq:converge_cond-2},~\eqref{eq:bdd_sums-1}, and~\eqref{eq:bdd_sums_x} hold, then for any $N_I\in\Pint$, there exists a $C(\alpha, \beta_1,\beta_2, N_I) <\infty$ such that
%\begin{align}
%    \sum_{k=0}^{N_I-1} (\xb(k) - \xb^\star)^\top U^\top \zb(k) \leq C(\alpha, \beta_1,\beta_2, N_I).
%\end{align}

\end{proof}
\begin{remark}\longthmtitle{Comparison of Algorithm~\ref{alg:D-ADAM} with Adam algorithm}
By dropping line~\ref{alg:z} of Algorithm~\ref{alg:D-ADAM} and changing the meaning of $g(k)$ from $\hat{U}^\top\zb(k-1)$ to the gradient of a certain objective function of $\xb$, Algorithm~\ref{alg:D-ADAM} is actually Adam algorithm. Such a change leads in very different routes to prove the convergence to the optimal $\xb^\star$. However, the fundamental assumptions are similar. The convergence statement of Adam algorithm~\cite[Theorem 10.5]{kingma2014adam} straight assumes~\eqref{eq:x_bdd} and \eqref{eq:bdd_sums-1} hold, and for Algorithm~\ref{alg:D-ADAM}, such an assumption is indirectly embedded in~\eqref{eq:converge_cond-1}, which actually leads to~\eqref{eq:x_bdd} and \eqref{eq:bdd_sums-1} as shown in the proof. Last but not least, the convergence statement of Adam algorithm~\cite[Theorem 10.5]{kingma2014adam} has additional assumptions on $\beta_1$ and $\beta_2$, which are used for deriving the bounds of the regret (or characterizing the convergence rate). Because Algorithm~\ref{alg:D-ADAM} does not use $\beta_1$ and $\beta_2$ for all the state variables (it only uses them in updating $\xb$), it is unclear on how to make additional assumptions on $\beta_1$ and $\beta_2$ to help characterize the convergence rate of Algorithm~\ref{alg:D-ADAM}. Practically, choosing $\beta_1$ and $\beta_2$ for Algorithm~\ref{alg:D-ADAM} based on numerical experiences of Adam algorithm in the literature works reasonably well.
\end{remark}

We conclude this section by noting that Algorithm~\ref{alg:D-ADAM} possesses some desirable properties: (1) only the dummy variables $\zb$ are exchanged through the communication network; (2) retrieving the sub-model, $\xb_i$ (or $A_{\D_{u,i}}$), that is useful for agent $i$ from $\zb$ requires local data $U_{\D_{u,i}}$ and $Y_{\D_{u,i}}$, which simultaneously preserves the privacy and enhances the security. We view those as the unique strengths of Algorithm~\ref{alg:D-ADAM} compared to other model identification algorithms. 

%\subsection{Decentralized and distributed controls with the identified sub-models}
%For the purpose of completeness, we discuss in this section on how the identified sub-models (sub-matrices) can be useful for decentralized or distributed controls with system-level awareness. Consider the target optimization that the agents want to solve collaboratively is given as
%\begin{align}\label{eq:quad_opt}
%    &\min_{u} \frac{1}{2}y^\top Q_2 y + Q_1 y, \;\;\text{ s.t. } y = A u,
%\end{align}
%where $Q_2 \succeq 0$. Optimization~\eqref{eq:quad_opt} is non-sparse and generally requires knowing all the $y$ to compute the gradient descent steps . 
%If $Q$ is known for all the agents, the sub-matrices $A_i$ for all $i\in\Nc$ are identified with Algorithm~\ref{alg:D-ADAM} and each agent $i$ updates its $u_i$ by
%\begin{align}\label{eq:quad_opt_u_i}
%    u_i(k) = u_i(k-1) - \alpha \cdot (QA_i)u_i(k-1),
%\end{align}
%then collectively,
%\begin{align*}
%    y(k) &= \sum_{i=1}^N A_{\D_{u,i}} u_i(k) \\
%        &= \sum_{i=1}^N \Big(A_{\D_{u,i}} u_i(k-1) - \alpha\cdot (Q A_{\D_{u,i}})u_i(k-1)\Big) \\
        %&= y(k-1) - \alpha \cdot Q y(k-1).
%\end{align*}
%The gradient descent method that solves~\eqref{eq:quad_opt} is written as 
%\begin{align}\label{eq:quad_opt_grad}
%    y(k) &= y(k-1) - \alpha Qy(k-1)  \\
%          &= y(k-1) - \alpha\cdot Q\Big(\sum_{i=1}^N A_i u_i(k-1)\Big).
%\end{align}
%If $Q$ is known for all the agents and the sub-matrices $A_i$ for all $i\in\Nc$ are identified with Algorithm~\ref{alg:D-ADAM}, \eqref{eq:quad_opt_grad} can be implemented in a decentralized way. 

\section{Numerical Studies}
We validate the proposed distributed model identification algorithm with two test cases. We first demonstrate that by adding Adam algorithm adaptive step-size, a faster convergence rate is observed for Algorithm~\ref{alg:D-ADAM} compared to~\eqref{eq:pd_dis}. We next apply the proposed algorithm to identify the LinDistFlow model~\cite{baran1989network} (or its equivalence) for a modified IEEE 37 buses system, and show a satisfactory control performance by leveraging the identified model. 
\subsection{A small-scale example}
We randomly generate a linear networked system given as
\begin{align}\label{eq:ex1}
    y= [y_1; y_2; \cdots, y_N] = \sum_{i=1}^N A_{\D_{u,i}} u_i,
\end{align}
where $u_i\in\real^{|\D_{u,i}|}$ and $y_i\in\real^{|\D_{y,i}|}$ are the input and output of agent $i$, $|\D_{u,i}|\in\{4,5\}$, $|\D_{y,i}|\in\{3,4\}$, and $N=5$. All the entries of matrix $A = [A_{\D_{u,1}},A_{\D_{u,2}},\cdots, A_{\D_{u,N}}]$ and $u_i$ are randomly generated with the standard normal distribution with the standard deviation of 1. We assume that the model identification algorithm~\eqref{eq:pd_dis} and Algorithm~\ref{alg:D-ADAM} start when the data matrices $U$ and $Y$ are constructed with $T$ steps, $T = 2\sum_{i=1}^N |\D_{u,i}|$. The algorithmic parameters are set as $\alpha = 10^{-3}$, $\beta_1 = 0.9$, $\beta_2 = 0.95$, and $\epsilon = 10^{-8}$. As shown in Figures~\ref{fig:Test_ID} and~\ref{fig:Test_ID_ADAM}, both~\eqref{eq:pd_dis} and Algorithm~\ref{alg:D-ADAM} have the estimated model converges to the actual ones.  Algorithm~\ref{alg:D-ADAM} has a noticeably  faster convergence rate, especially at the first hundreds iterations. We conjecture the dominating factor for the convergence after few hundred steps is on the consensus of $\zb$, so there is no much difference between~\eqref{eq:pd_dis} and Algorithm~\ref{alg:D-ADAM} afterward.
\begin{figure}
    \centering
    \includegraphics[width=0.88\linewidth]{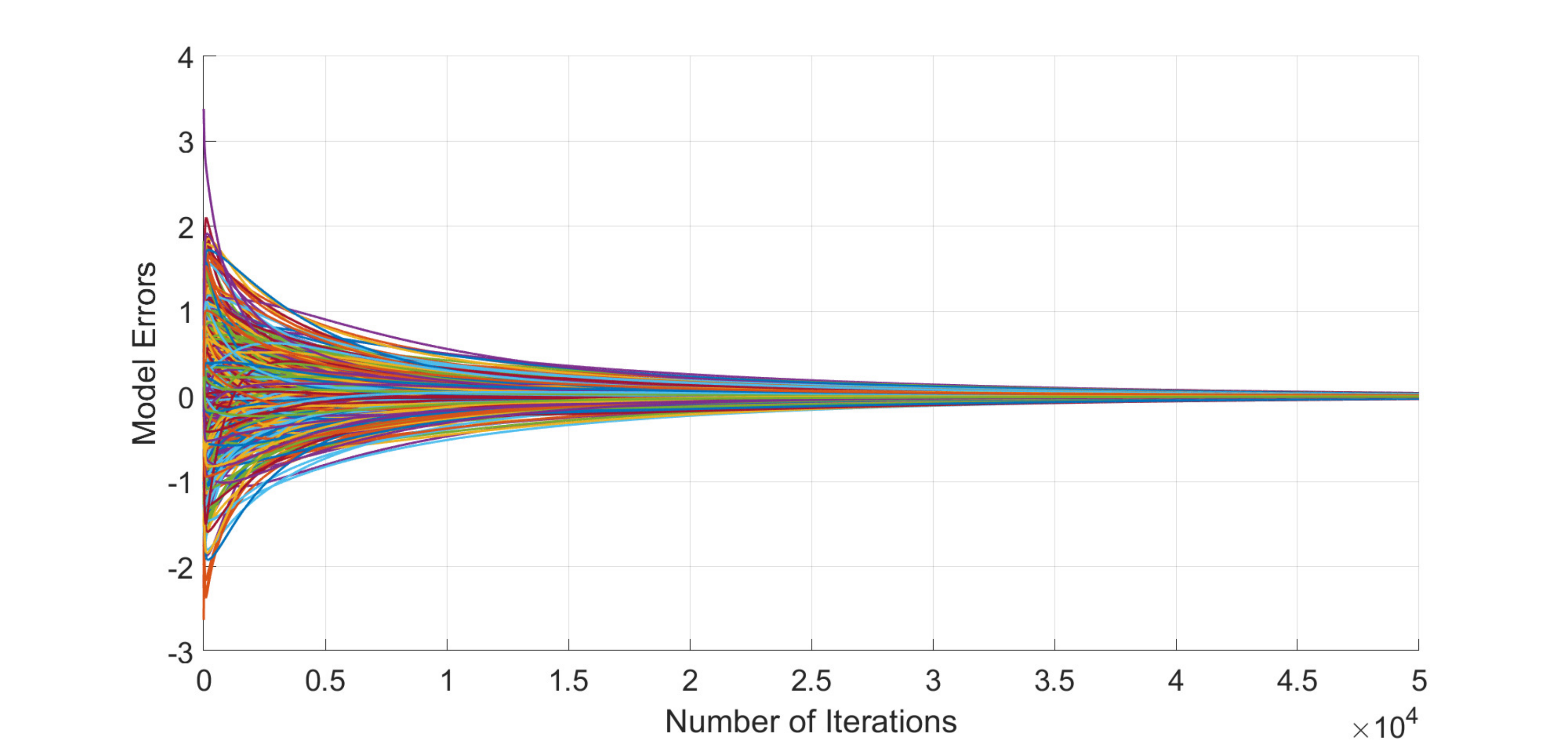}
    \caption{The absolute value difference for each element of the estimated $A$ and $A^\star$ by implementing~\eqref{eq:pd_dis}.}
    \label{fig:Test_ID}
\end{figure}

\begin{figure}
    \centering
    \includegraphics[width=0.88\linewidth]{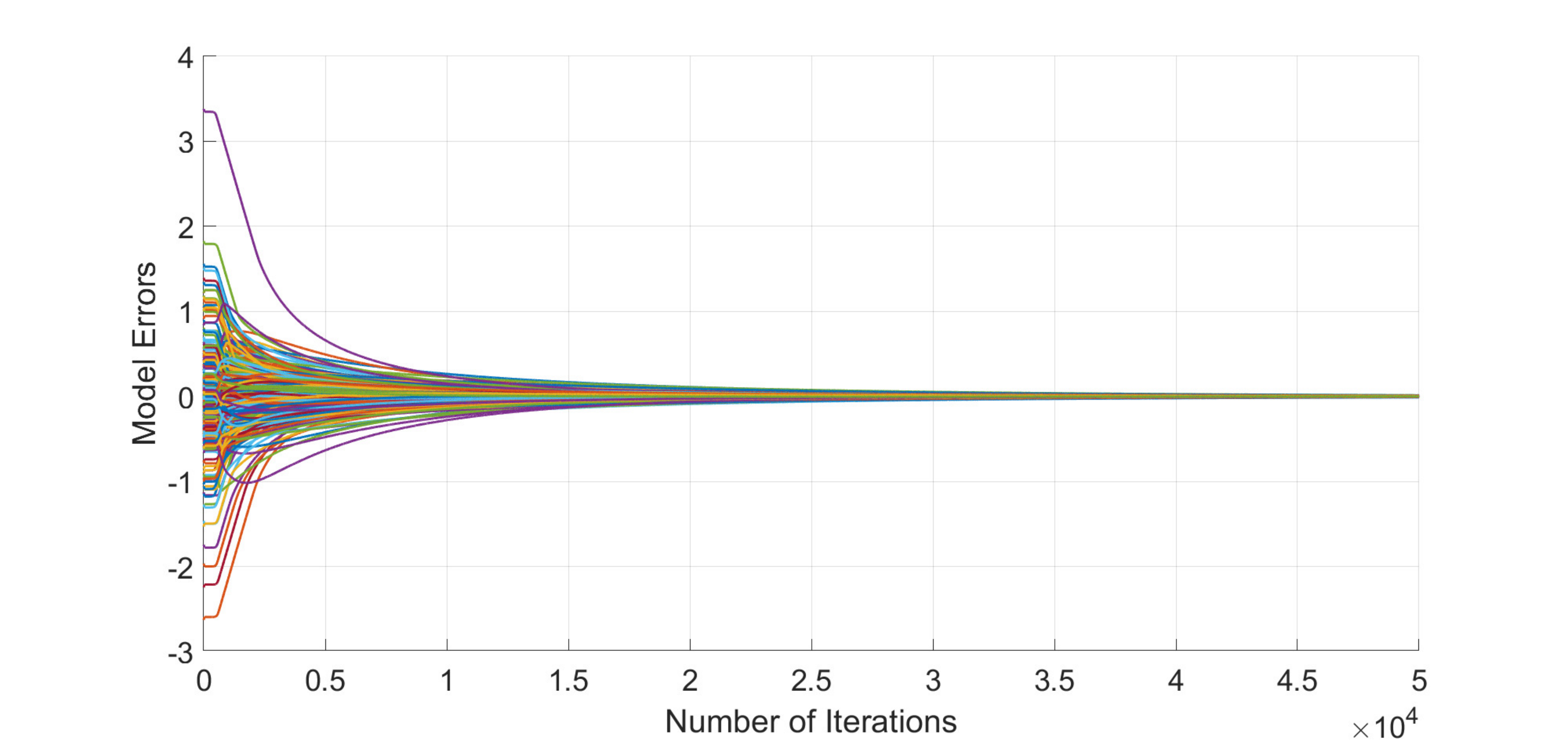}
    \caption{The absolute value difference for each element of the estimated $A$ and $A^\star$ for Algorithm~\ref{alg:D-ADAM}.}
    \label{fig:Test_ID_ADAM}
\end{figure}

\subsection{A modified IEEE 37-bus test system}
In this section, we apply Algorithm~\ref{alg:D-ADAM} to identify the LinDistFlow model $A$ for the voltage magnitude regulation purpose. We modify the IEEE 37-bus test system with penetration of PVs as illustrated in Figure~\ref{fig:IEEE37}. We assume that every PV bus acts as an agent and knows its local active/reactive power injections and the voltage magnitude of the bus, and it can adjust its power injections to help regulate the voltage magnitudes. The power injections of PV bus $i$ is collected as the control variable $u_i\in\real^2$ and the voltage magnitude measurements are collected by $y_i\in\real_+$. We assume that a connected distributed communication network $\graph$ is established for the purposes of running Algorithm~\ref{alg:D-ADAM} and the distributed feedback-based algorithm developed in~\cite{chang2019saddle}. 
%The PV buses for implementing Algorithm~\ref{alg:D-ADAM} to identifying the sub-models that draws relations between $[u_1,u_2,\cdots,u_N]$ and $[y_1,y_2,\cdots,y_N]$, and the sub-models are used for the distributed feedback-based control algorithm developed in~\cite{chang2019saddle}. 
\begin{figure}
    \centering
    \includegraphics[width=0.7\linewidth]{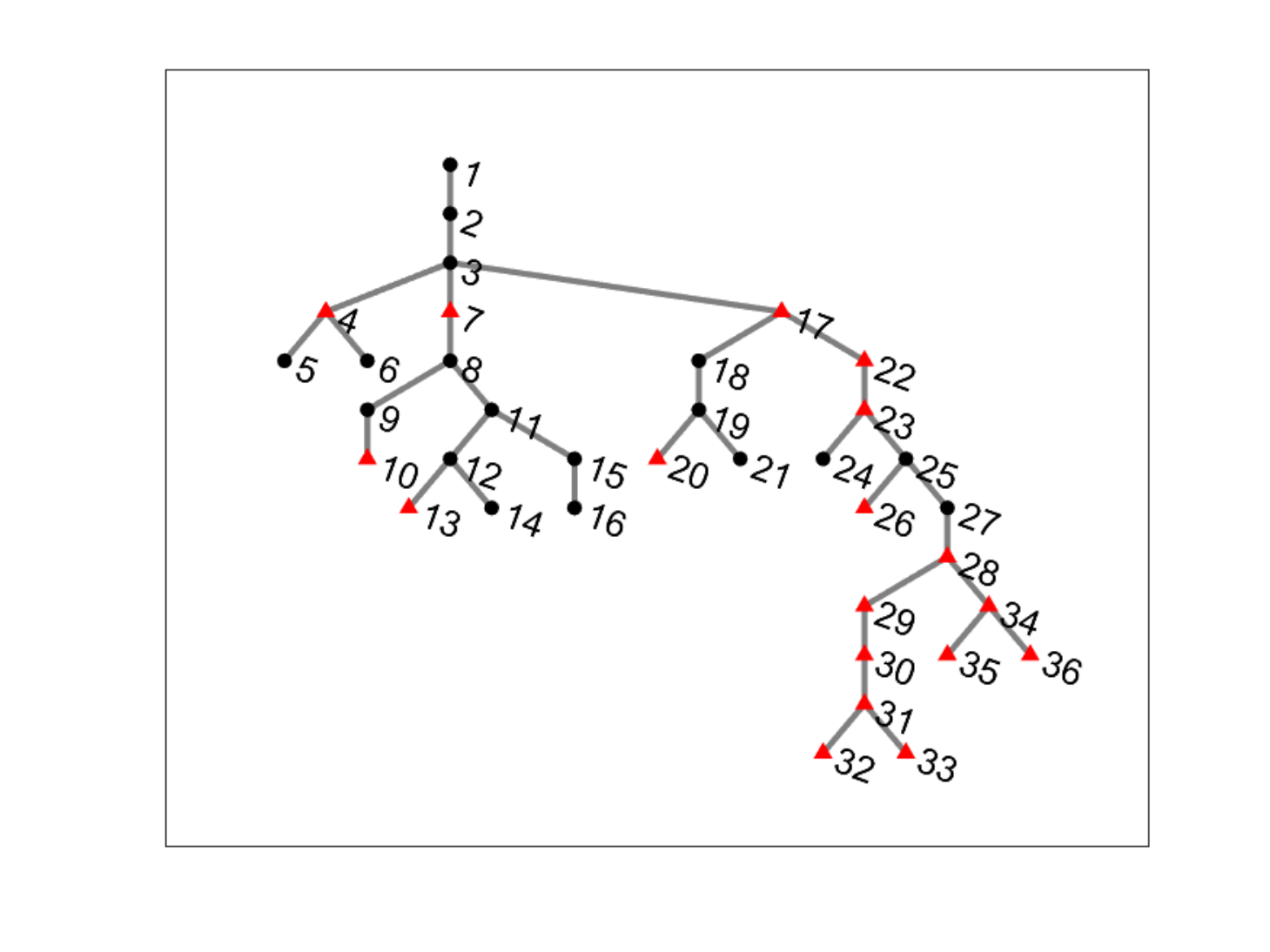}
    \caption{Illustration of the modified IEEE 37-bus system with the buses highlighted in red triangles are PV buses}
    \label{fig:IEEE37}
\end{figure}

Implementing the distributed feedback-based control algorithm developed in~\cite{chang2019saddle} requires a LinDistFlow model $A$ or its equivalence. Although the model $A$ does not need to be very accurate as explained in~\cite{colombino2019towards}, a reasonably accurate $A$ is still needed. We simulate the following scenarios: (i) no control on the voltage magnitudes; (ii) distributed feedback-based control with $A$ derived by the given knowledge of line impedances and network topology; (iii) distributed feedback-based control with $A$ identified by Algorithm~\ref{alg:D-ADAM} offline; (iv) distributed feedback-based control with $A$ identified by Algorithm~\ref{alg:D-ADAM} online whenever the new pair of $u$ and $y$ is available to replace the oldest pair. The algorithmic parameters $\alpha$, $\beta_1$, $\beta_2$ and $\epsilon$ are set the same as the first numerical example for Algorithm~\ref{alg:D-ADAM}. The number of time instances of constructing input and output data matrices, $U$ and $Y$, is set by $T = 140$.   

Figures~\ref{fig:Vwo}-\ref{fig:Vw_est_online} illustrate the voltage magnitudes for the four scenarios. Unsurprisingly, the control performance with the known $A$ is among the bests. An interesting observation is that for scenario (iii), the $A$ concatenated by the identified $A_{\D_{u,i}}$ for all the agents is very different from the $A$ derived from the given impedance and topology, regardless how we initialize $A_{\D_{u,i}}$ (or $\xb_i$). However, a similar voltage regulation result to scenario (ii) is achieved. Our explanation is that another form of linearized power flow is identified through Algorithm~\ref{alg:D-ADAM} because there can be many valid linearized models. Scenario (iv) is expected to perform best because it adapts the  model $A$ in accordance with the operating points online (could be understood as adjusting the linearization point). However, as shown in Figure~\ref{fig:Vw_est_online}, we observe a bit more fluctuation of the regulated voltage magnitudes compared to scenarios (ii) and (iii). Some fine-tuning of the number of steps before updating $U$ and $Y$ may be needed for a better performance. Overall, both scenarios (iii)-(iv) achieve satisfactory voltage regulations while keeping the power injections and voltage magnitude measurements local, and the packet exchanges are limited to dual variables associated with voltage constraints and dummy variables for the distributed algorithms. 
\begin{figure}
    \centering
    \includegraphics[width=0.88\linewidth]{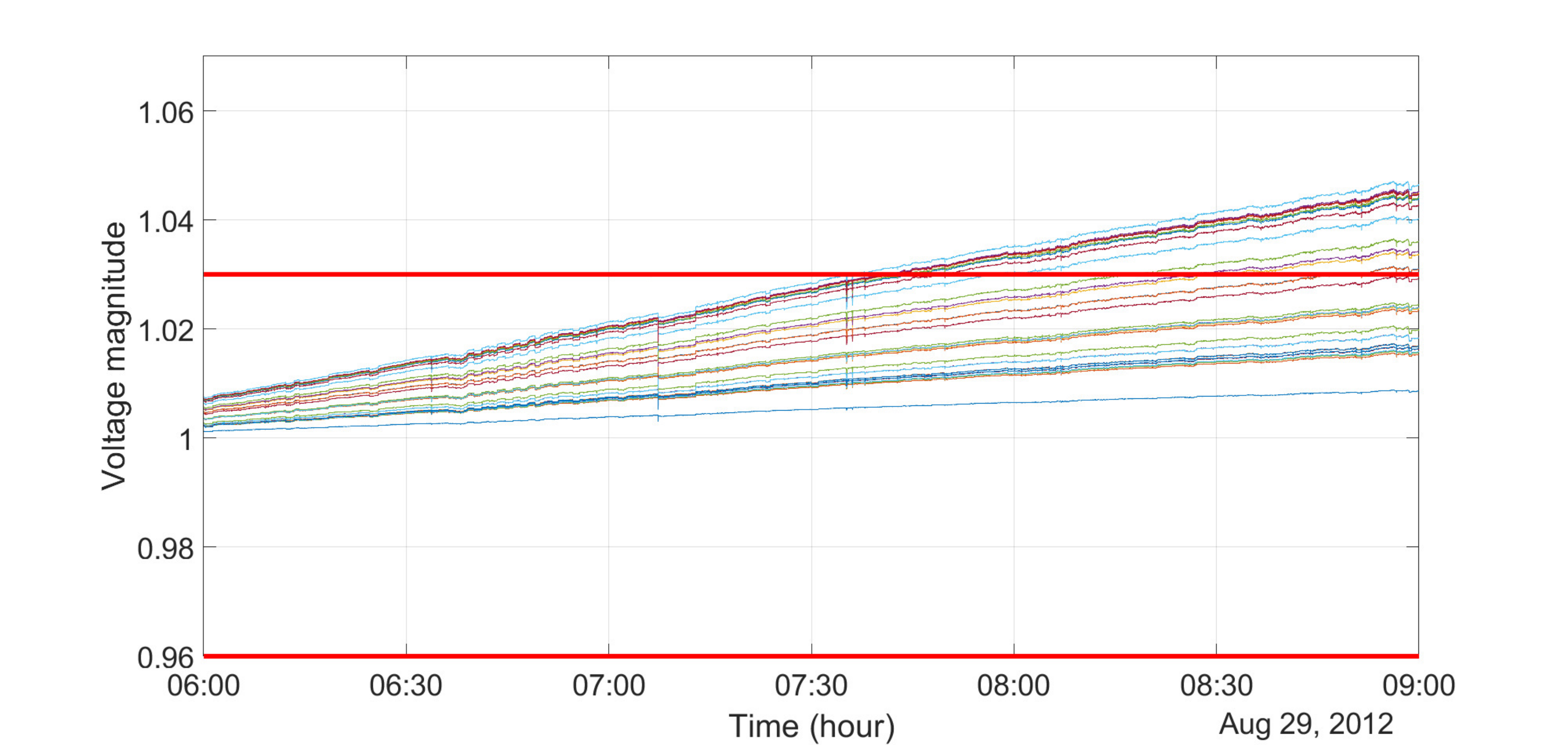}
    \caption{The voltage magnitudes (p.u.) over time without control.}
    \label{fig:Vwo}
\end{figure}

\begin{figure}
    \centering
    \includegraphics[width=0.88\linewidth]{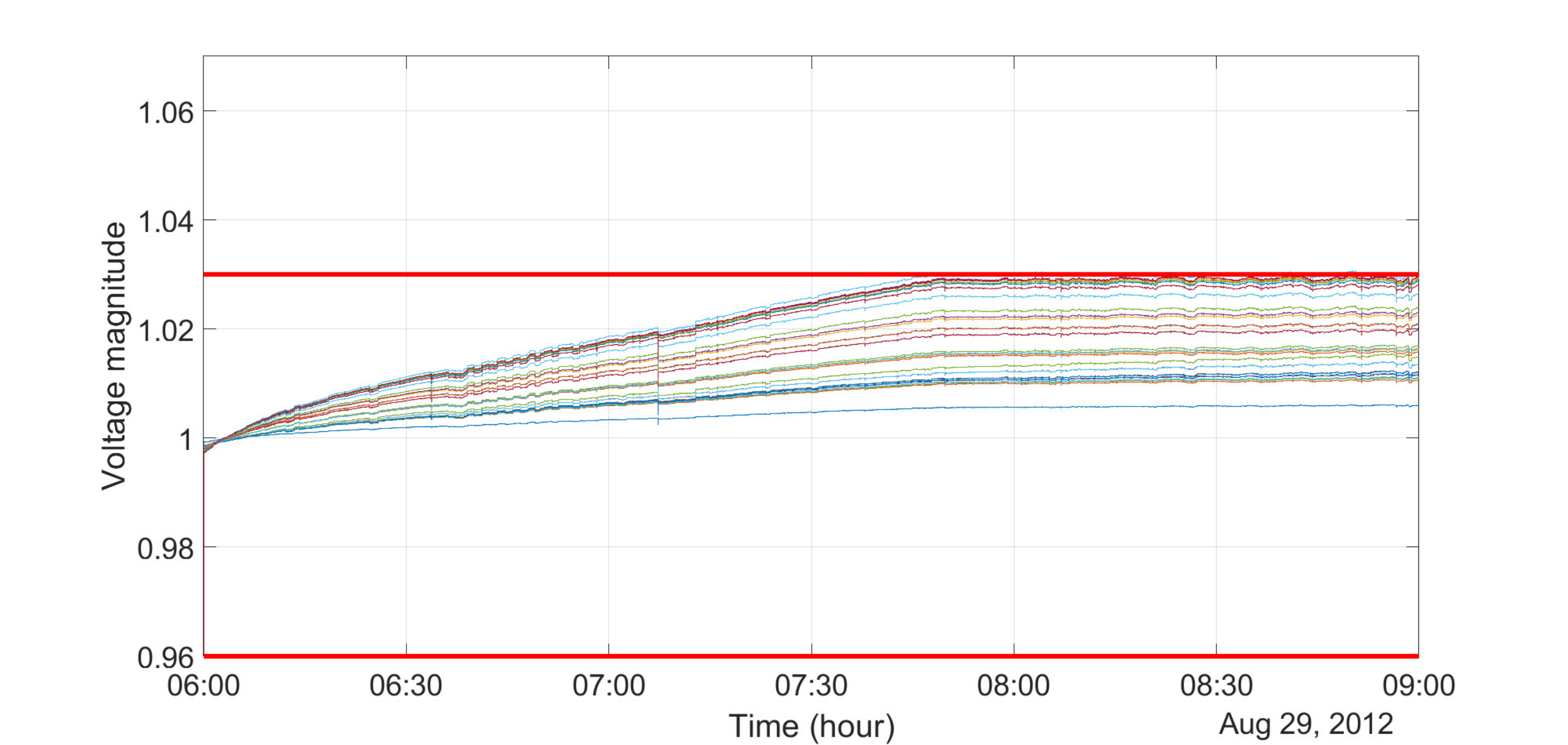}
    \caption{The voltage magnitudes (p.u.) over time with distributed feedback-based control and known LinDistFlow model.}
    \label{fig:Vw}
\end{figure}

\begin{figure}
    \centering
    \includegraphics[width=0.88\linewidth]{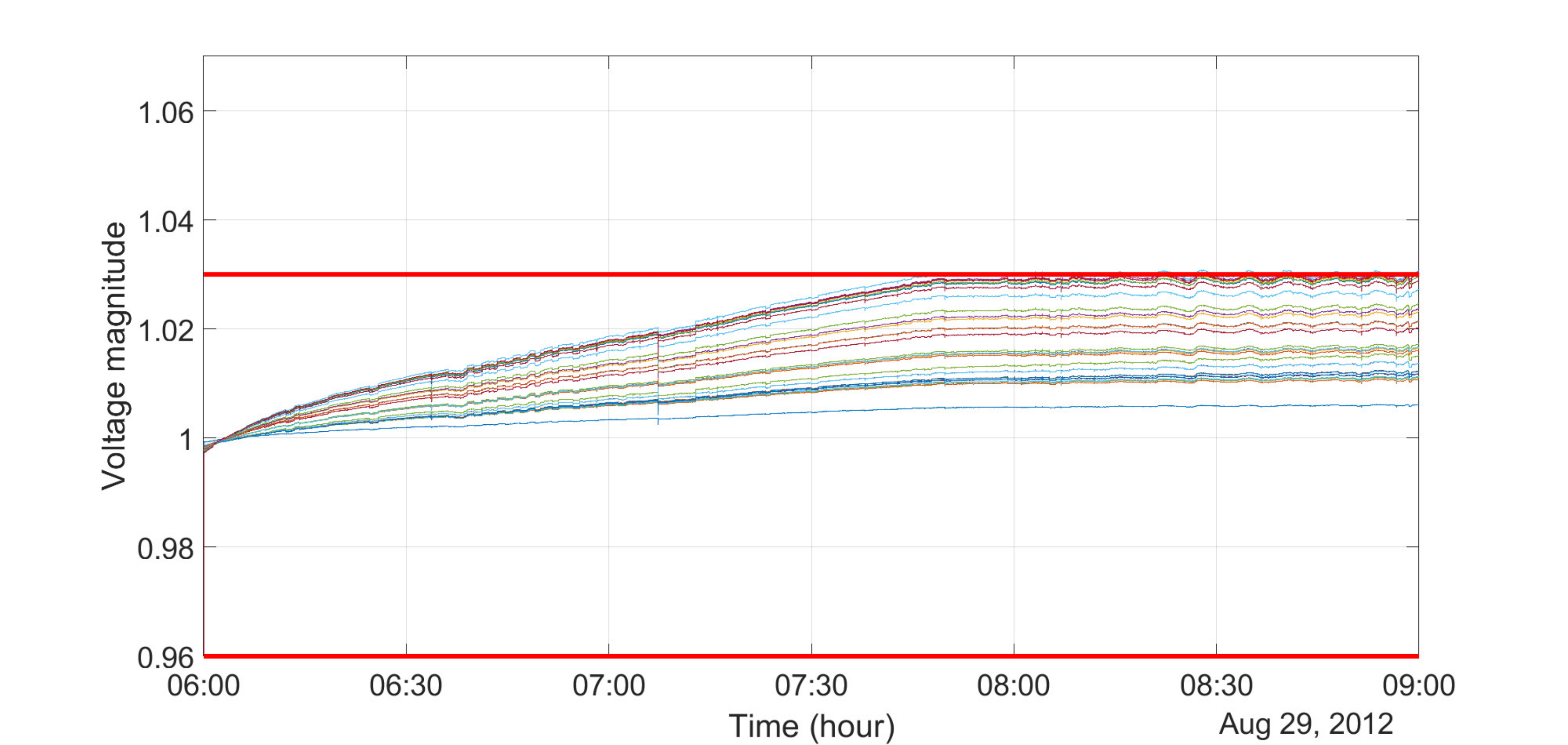}
    \caption{The voltage magnitudes (p.u.) over time with distributed feedback-based control and the identified LinDistFlow model.}
    \label{fig:Vw_est}
\end{figure}

\begin{figure}
    \centering
    \includegraphics[width=0.88\linewidth]{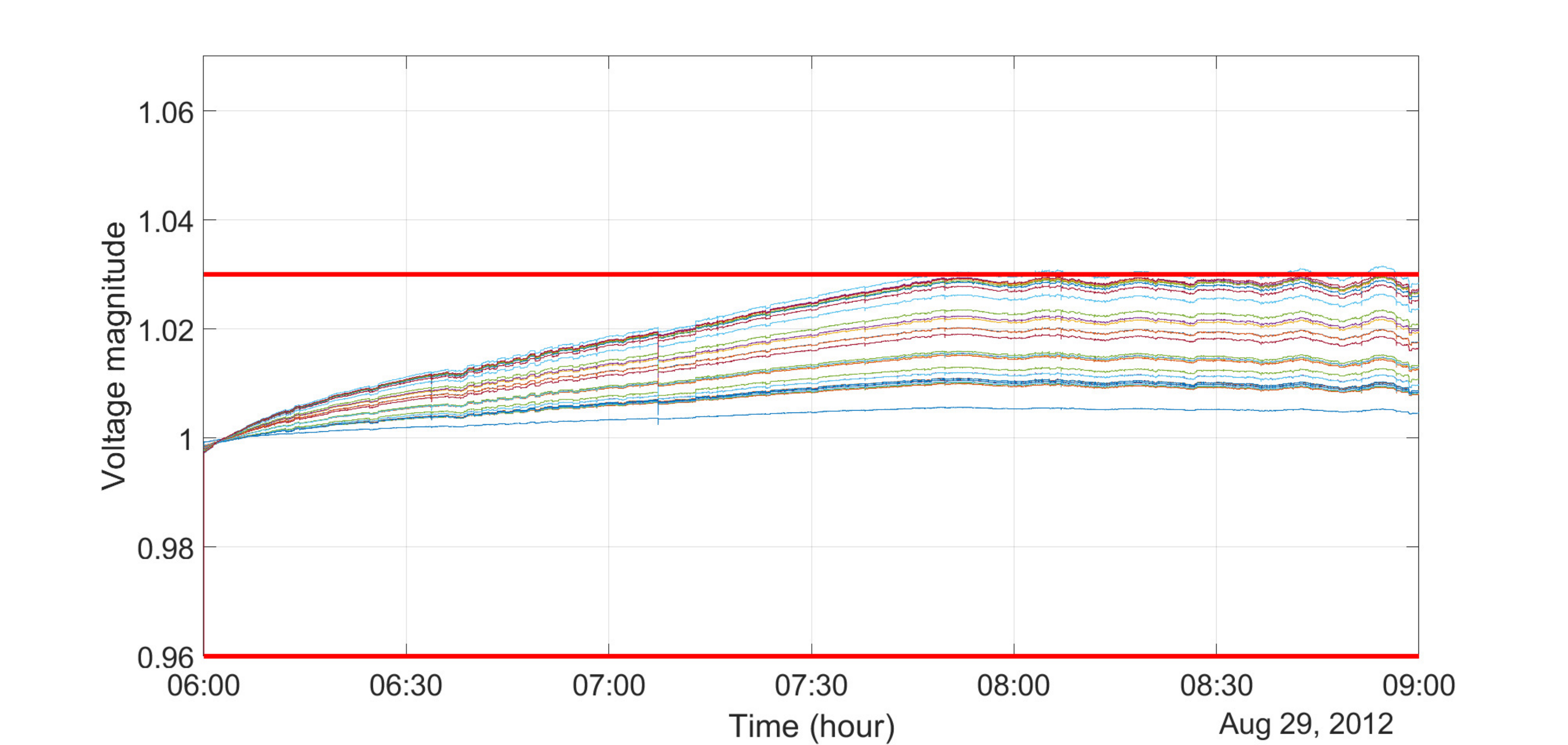}
    \caption{The voltage magnitudes (p.u.) over time with distributed feedback-based control and the identified LinDistFlow model updated online.}
    \label{fig:Vw_est_online}
\end{figure}

% Collecting the data from all the actuators and sensors can only happen offline; however, it is more desirable to update the model dynamically when new data come in so that the system model is up-to-date for controller designs. The centralized collection of all the actuators and sensors data in real time is unrealistic, which motivates us to develop a distributed method that dynamically update the system model in real time. The proposed distributed model identification algorithm can be a part of ``model-free'' controller design in the sense that the model is straight from locally available data. 

\section{Conclusion}
In this paper, we proposed a distributed model identification algorithm such that each agent identifies a sub-model that describes its local controls with the overall system outputs. The algorithm is designed such that the agents do not need to share their local data and the convergence rate is practically better than the gradient descent consensus algorithm because of adding the adaptive step-size of the Adam algorithm. We focus on the power distribution system applications in this paper, but we envision the proposed algorithm is potentially useful for decentralized or distributed controls in many other networked systems such as robotics, economics, telecommunications. Our near future works will be on refining the online distributed model identification for improved linearized models for nonlinear or time-varying linear systems. More in-depth testing of the proposed algorithm for some potential applications is also among our future works.

%In fact, the distributed algorithm~\eqref{eq:pd} is similar to the training process of machine learning in the sense that only a batch of data is used at a time and those data are used alternatively for the training. The difference is that in our distributed model identification algorithm~\eqref{eq:pd}, the ``training'' is executed distributively and the goal is identifying the actual model for the system. 

%Second, built on the identified sub-models, we develop a gradient-based decentralized controller that is equivalent to the centralized gradient control. There can be various forms of decentralized controllers built on the identified sub-models depending on the needs of the applications.

%The proposed method only applicable for system with certain model structures. The model may not capture the entire correlation between the inputs and outputs. Another potential future direction is combining the distributed control with the controls generated by a neural network model with latent data collected by a central agent. 

\bibliographystyle{IEEEtran}
\bibliography{biblio.bib}
\end{document}